\documentclass[11pt,a4]{article}
%%%%%%%%%%%%%%%%%%%%%%%%%%%%%%%%%%%%%%%%%%%%%%%%%%%%%%%%%%%%%%%%%%%%%%%%%%%%%%%%%%%%%%%%%%%%%%%%%%%%%%%%%%%%%%%%%%%%%%%%%%%%%%%%%%%%%%%%%%%%%%%%%%%%%%%%%%%%%%%%%%%%%%%%%%%%%%%%%%%%%%%%%%%%%%%%%%%%%%%%%%%%%%%%%%%%%%%%%%%%%%%%%%%%%%%%%%%%%%%%%%%%%%%%%%%%
\usepackage{amsfonts}
\usepackage{graphics,amssymb,amsthm,amsmath,epsf}
\usepackage{graphicx}
\usepackage{subfigure}

\usepackage[round]{natbib}

\oddsidemargin 0pt \topmargin -6pt \headsep 24pt \headheight 15pt
\textheight 670pt \flushbottom \textwidth 450pt
\parindent 20pt
\parskip 0pt
\jot 5pt

\newtheorem{algorithm}{Algorithm}[section]
\newtheorem{theorem}{Theorem}[section]

\numberwithin{equation}{section}
\newtheorem{remark}{Remark}[section]

\begin{document}

\date{}
\title{Comparing the Shape Parameters of Two Weibull Distributions Using Records: A Generalized Inference }

\author{H. Zakerzadeh , A. A. Jafari\thanks{Corresponding: aajafari@yazd.ac.ir} \\
{\small Department of Statistics, Yazd University, Yazd, Iran}}

\date{}
\maketitle
\begin{abstract}
 The Weibull distribution is a very applicable model for the lifetime data. For inference about two Weibull distributions using records, the shape parameters of the distributions are usually considered equal. However, there is not an appropriate method for comparing the shape parameters in the literature.  Therefore, comparing the shape parameters of two Weibull distributions is very important. In this paper, we propose a method for constructing confidence interval and testing hypotheses about the ratio and difference of shape   parameters using the concept of the generalized {\it p}-value and the generalized confidence interval. Simulation studies showed that our method is satisfactory. In the end, a real example is proposed to illustrate this method.

\end{abstract}

\noindent{\bf Keywords:} Generalized {\it p}-value; Generalized confidence interval; Records; Weibull distribution.

\noindent{\bf MSC2010:} 62F03; 62F40.

\section{Introduction}

\cite{chandler-52}
%Chandler (1952)
introduced the concept of record value and studied some its properties. Record data arise in a wide variety of practical situations; for example industrial stress testing, meteorological analysis, sporting and athletic events, and mining surveys. Properties of record data have been extensively studied in the literature.
 \cite{ahsanullah-95}
 %Ahsanullah (1995)
and
\cite{ar-ba-na-98}
%Arnold et al. (1998)
are two good references about records and their properties.
%Among others are Gulati and Padgett (1994), Raqab and Ahsanullah (2001), and Raqab (2002), Ahmadi and Balakrishnan (2004).

Let $X_1,X_2,\dots $ be a sequence of independent and identically distributed  continuous random variables having the same distribution as the (population) random variable $X$. The random variable $X_k$ is an upper record value if it is greater than all preceding values $X_1,X_2,\dots ,X_{k-1}$. The sequence of record time $\{T_n, n\ge 0\}$ is defined as follows:
\[T_0=1\ \ \ {\rm with\ probability}\ 1,\ \ \ {\rm and}\ \ \ \ T_n=\min\left\{i:\ X_i>X_{T_{n-1}}\right\} \ \ \ \ \ n\ge 1.\]
The sequence of upper record values is defined by $\{R_n=X_{T_n}, n=0,1,\dots \}$. By definition, $X_1$ is taken as the first upper record value. In the same way, an analogous definition can be provided for lower record values.

Suppose that we observe the first $n+1$ upper record values $R_0, R_1,\dots ,R_n$ from the cumulative distribution function (cdf)  $F_{\boldsymbol\theta }(x)$ and the probability density function (pdf)  $f_{\boldsymbol \theta }(x)$, where $\boldsymbol \theta$ is a vector of parameters. Then, the joint distribution of the first $n+1$ record values
\citep[for more details see][page 10]{ar-ba-na-98}
%(for more details see Arnold et al. 1998)
 is given by
\[f_{\boldsymbol R}\left({\boldsymbol r}\right)=f_{\boldsymbol\theta }\left(r_n\right)\prod^{n-1}_{i=0}{\frac{f_{\boldsymbol\theta }\left(r_i\right)}{1-F_{\boldsymbol\theta }\left(r_i\right)}},\ \ \ \ \ \ \ \ \ r_0<r_1<\dots <r_n,\]
where ${\boldsymbol R}=(R_0,R_1,\dots,R_n)$ and ${\boldsymbol r}=(r_0,r_1,\dots ,r_n)$.

Some researches have considered inference about the Weibull distribution based on records:
\cite{ho-pa-94} discussed the maximum likelihood estimates (MLE's) for both scale and shape parameters of a Weibull distribution.
\cite{chan-98}
%Chan (1998).
and \cite{su-ba-99}
%Sultan and Balakrishnan (1999)
presented some inferential methods for   the location-scale families  of the distributions.
  Exact confidence intervals and exact joint confidence regions for the parameters  of a Weibull distribution are derived by \cite{chen-04}.
\cite{wu-ts-06}
%Wu and Tseng (2006)
proposed  a computational approach for inference about the shape parameter.
\cite{so-ab-su-06}
%Soliman et al. (2006)
obtained the MLE's for the parameters of a Weibull distribution  and developed a Bayesian analysis using record values.
  Exact joint confidence regions for the parameters are also derived by \cite{as-ab-11}, meanwhile
\cite{te-gu-12} proposed a confidence interval for $n$th upper/lower record value.
\cite{te-na-13}
%Teimouri and Nadarajah (2013)
derived exact expressions for constructing bias corrected MLE's.
When the shape parameters of two Weibull distributions are equal, the stress-strength parameter of these distributions has a closed form. For such a case,
\cite{baklizi-12}
%Baklizi (2012)
proposed some methods for estimating and constructing confidence interval for the  parameter of stress-strength reliability based on record values.
 However, it seems that there is no method for inference about the shape parameters of two Weibull distributions.

In this paper, we have considered constructing confidence interval and testing the hypothesis about the ratio (and difference) of two shape parameters. This is an extension of the method  proposed by
\cite{wu-ts-06}
%Wu and Tseng (2006)
for the shape parameter of one Weibull distribution. For inference, we have applied the concepts of generalized confidence interval and generalized {\it p}-value introduced by
 \cite{ts-we-89}
%Tsui and Weerahandi (1989)
and
\cite{weerahandi-93},
%Weerahandi (1993)
respectively.
These approaches have been used successfully to address several complex problems
\citep[see][]{weerahandi-95}
%(see Weerahandi, 1995)
such as inference about the mean of a Weibull distribution
\citep{kr-li-xi-09},
%(Krishnamoorthy et al., 2009)
the stress-strength reliability involving two independent Weibull distributions
\citep{kr-li-10},
%(Krishnamoorthy and Lin, 2010).
the stress-strength reliability in two-parameter exponential distribution \citep{baklizi-13},
inference on common mean of several normal populations \citep{kr-lu-03},
inference on common mean of several log-normal populations \citep{be-ja-06-ge}
and comparing two generalized variances of multivariate distributions \citep{jafari-12}.

The rest of the present article is organized as follows: In Section \ref{sec.GPV}, we briefly review the concepts of generalized confidence interval and generalized {\it p}-value. A method for inference about the ratio and difference of two shape parameters is proposed in Section \ref{sec.inf}.
In Section \ref{sec.sim}, we investigate the performance of the proposed approach using a simulation study. An  illustrative example is proposed in Section \ref{sec.ex}.

\section{ Generalized {\it p}-value and generalized confidence interval}
\label{sec.GPV}
Let ${\boldsymbol X}$ be a random variable whose distribution depends on a vector parameters $\boldsymbol\theta=(\tau,\boldsymbol\lambda)$, where $\tau$ is
a scale parameter of interest and $\boldsymbol\lambda$ is a vector of nuisance parameters. Let ${\boldsymbol x}$ denotes the observed value of ${\boldsymbol X}$. A generalized pivotal quantity for $\tau $ is a random quantity denoted by $T({\boldsymbol X};{\boldsymbol x};\tau)$ and satisfies the following conditions:

\noindent(i) The distribution of $T({\boldsymbol X};{\boldsymbol x};\tau)$ is free of any unknown parameters.

\noindent(ii) The value of $T({\boldsymbol X};{\boldsymbol x};\tau)$ at ${\boldsymbol X} = {\boldsymbol x}$, i.e., $T({\boldsymbol x};{\boldsymbol x};\tau )$ is free of the nuisance parameter $\boldsymbol\lambda $.  In most cases, $T\left({\boldsymbol x};{\boldsymbol x};\tau \right)=\tau$.

Appropriate percentiles of $T({\boldsymbol X};{\boldsymbol x};\tau )$ form a confidence interval for $\tau$. Specifically, if $T_{\delta }$  denotes the
100$\delta $ percentage point of $T({\boldsymbol X};{\boldsymbol x};\tau)$, then $(T_{\gamma /2}, T_{1-\gamma /2})$ is a $100(1-\gamma)\% $ generalized confidence interval for $\tau$. Because, for a given ${\boldsymbol x}$, the distribution of $T({\boldsymbol X};{\boldsymbol x};\tau )$ does not depend on any unknown parameters, its percentiles can be found.

In the above setup, suppose that we are interested in testing the hypotheses
\begin{equation}\label{eq.H0}
H_0:\tau \le {\tau }_0\ \ \ \ \ \ \ \ vs.\ \ \ \ \ \ \ H_1:\tau >{\tau }_0,
\end{equation}
for a specified known  ${\tau}_0$. The generalized test variable, denoted by $T^*({\boldsymbol X};{\boldsymbol x};\tau )$, is defined as follows:

\noindent(i) The value of $T^*(\boldsymbol X;\boldsymbol x;\tau )$ at $\boldsymbol X= \boldsymbol x$ is free of any unknown parameters.

\noindent(ii) The distribution of $T^*({\boldsymbol X};{\boldsymbol x};\tau )$ is stochastically monotone (i.e., stochastically increasing or stochastically decreasing) in $\tau$ for any fixed ${\boldsymbol x}$ and  $\boldsymbol\lambda$.

\noindent(iii) The distribution of $T^*({\boldsymbol X};{\boldsymbol x};\tau )$ is free of any unknown parameters.

Let $t^*=T^*\left({\boldsymbol x};{\boldsymbol x};{\tau }_0\right)$ denotes the observed value of $T^*({\boldsymbol X};{\boldsymbol x};\tau )$ at $\left({\boldsymbol X};\tau \right)=({\boldsymbol x};{\tau }_0)$. When the above three conditions in (i)-(iii) hold, the generalized \textit{p}-value for testing the hypotheses in \eqref{eq.H0} is defined as
\begin{equation}\label{eq.pvG}
p=P\left(T^*\left(\boldsymbol X;\boldsymbol x;\tau_0\right)\le t^*\right),
\end{equation}
if $T^*({\boldsymbol X};{\boldsymbol x};\tau )$ is stochastically decreasing in $\tau $. In many situations, $T^*\left({\boldsymbol X};{\boldsymbol x};\tau \right)=$ \break $T({\boldsymbol X};{\boldsymbol x};\tau)-\tau $, where $T({\boldsymbol X};{\boldsymbol x};\tau)$ is a generalized pivotal variable. The test based on the generalized \textit{p}-value rejects $H_0$ when the generalized \textit{p}-value is smaller than a given level $\gamma $. However, the size and the power function of such a test may depend on the nuisance parameters.

For more details on generalized \textit{p}-values and generalized confidence intervals, we refer readers to
\cite{weerahandi-95}.
%Weerahandi (1995).

\section{ Inference about the parameters}
\label{sec.inf}
The Weibull distribution with parameters $\alpha $ and $\beta $ has the pdf
\[F\left(x\right)=1-e^{-{\left(\frac{x}{\alpha}\right)}^\beta},\ \ \ x>0,\ \ \ \alpha>0,\ \ \ \beta>0,\]
and the cdf
\[f\left(x\right)=\frac{\beta }{\alpha^\beta}x^{\beta-1}e^{-{\left(\frac{x}{\alpha}\right)}^\beta},\ \ \  x>0.\]

This distribution is a generalization of the exponential distribution and the Rayleigh distribution. Also, $Y={\log (X)}$ has extreme value (Gumbel) distribution with parameters $b=\frac{1}{\beta}$ and $a=\log(\alpha)$, when $X$ has a Weibull distribution with parameters $\alpha $ and $\beta $.
It is a well-known distribution that is widely used for lifetime models while having numerous varieties of shapes and being very flexible such that it has both increasing and decreasing failure rates. Based on this, the Weibull distribution is used for many applications such as hydrology, reliability engineering, weather forecasting and insurance.

Suppose $\boldsymbol R_i=(R_{i0},R_{i1},\dots ,{R_{in}}_i)$, $i=1,2$ are the set of records corresponding to an independent and identically sequence
 of a Weibull distribution with parameters ${\alpha }_i$ and ${\beta}_i$.
In this section, we consider constructing confidence interval for the ratio of the shape parameters, $\pi=\frac{\beta_1}{\beta_2}$, and
 testing the one-sided hypotheses
\begin{equation}\label{eq.H01}
H_0:\pi \le \pi_0\ \ \ \ \ \ vs.\ \ \ \ \ \ H_1:\pi >\pi_0,
\end{equation}
and the two-sided hypotheses
\begin{equation}\label{eq.H02}
H_0:\pi =\pi_0\ \ \ \ \ \ vs.\ \ \ \ \ \ H_1:\pi \ne \pi_0,
\end{equation}
where $\pi_0$ is a specified value.

 The cdf of the record values, $\boldsymbol R_i$,  can be written as
\begin{eqnarray*}
f_{\boldsymbol R_i}(\boldsymbol r_i)
=\frac{{\beta }^{n_i+1}_i}{{\alpha }^{{\beta }_i\left(n_i+1\right)}_i}e^{-{\left(\frac{r_{in_i}}{\alpha_i}\right)}^{\beta_i}}\prod^{n_i}_{j=0}{r^{\beta_i-1}_{ij}}, \ \ \ \ \ 0<r_{i0}<r_{i1}<\dots<{r_{in}}_i,
\end{eqnarray*}
where $\boldsymbol r_i=(r_{i0},r_{i1},\dots ,{r_{in}}_i)$. Therefore, $\left(R_{in_i},\prod^{n_i}_{j=0}{R_{ij}}\right)$ is a sufficient statistic for $\left(\alpha _i,\beta_i\right)$.
 In addition, the MLE's of the parameters $\alpha_i$ and $\beta_i$ are \citep[see][]{so-ab-su-06}
%(see Hoinkes and Padgett, 1994)
\begin{equation}\label{eq.MLE}
{\hat{\beta }}_i=\frac{n_i+1}{\sum^n_{i=0}{{\log  \left(\frac{R_{in_i}}{R_{ij}}\right)}}},\ \ \ \ \ \ \ {\hat{\alpha }}_i=\frac{R_{in_i}}{{\left(n_i+1\right)}^{\frac{1}{\hat{\beta_i}}}}.
\end{equation}

Based on  the equality of shape parameters of  two Weibull distributions, i.e. ${\beta }_1={\beta }_2=\beta $,  the joint density function of these record values can be written as
\begin{align*}
f_{\boldsymbol R_1, \boldsymbol R_2}(\boldsymbol r_1,\boldsymbol r_2)
=\frac{\beta^{n_1+n_2+2}}{\alpha^{\beta \left(n_1+1\right)}_1{\alpha }^{\beta \left(n_2+1\right)}_2} \ e^{-{\left(\frac{r_{1n_1}}{{\alpha }_1}\right)}^{\beta }-{\left(\frac{r_{2n_2}}{{\alpha }_2}\right)}^{\beta }}\prod^{n_1}_{j=0}{r^{\beta-1}_{1j}}\prod^{n_2}_{h=0}{r^{\beta-1}_{2h}}.
\end{align*}
Therefore, $\left(R_{1n_1},R_{2n_2},\prod^{n_1}_{j=0}{R_{1j}}\prod^{n_2}_{h=0}{R_{2h}}\right)$ is a sufficient statistic for $\left(\alpha_1,\alpha_2,\beta \right)$, and the MLE's of the parameters ${\alpha }_1$, ${\alpha }_2$ and $\beta$ are \citep[see][]{baklizi-12}
%(see Baklizi, 2012)
\begin{equation}
\hat{\beta }=\frac{n_1+n_2+2}{\sum^{n_1}_{j=0}{\log \left(\frac{R_{1n_1}}{R_{1j}}\right)}+\sum^{n_2}_{j=0}{{\log\left(\frac{R_{2n_2}}{R_{2j}}\right)}}},\ \ \ \  \hat{\alpha}_i=\frac{R_{in_i}}{(n_i+1)^{\frac{1}{\hat{\beta }}}} \ \ \ i=1,2.
%\ \ \ \  {\hat{\alpha }}_2=\frac{R_{2n_2}}{{\left(n_2+1\right)}^{\frac{1}{\hat{\beta}}}}.
\end{equation}

 \cite{wu-ts-06}
%Wu and Tseng (2006)
has proposed an approach for inference about the shape parameter of a Weibull distribution.  We will use this method for inference about $\pi$, and propose a generalized confidence interval for this parameter as well as a generalized test variable for testing the  hypotheses in \eqref{eq.H01} and \eqref{eq.H02}.

Let
\[W_i\left({\beta }_i\right)=\frac{\sum^{n_i}_{j=0}{R^{{\beta }_i}_{ij}}}{(n_i+1){\left(\prod^{n_i}_{j=0}{R_{ij}}\right)}^{\frac{{\beta }_i}{n_i+1}}},\ \ \ \ \ \ i=1,2.\]
\cite{wu-ts-06} show that $W_i({\beta }_i)$ is an increasing function with respect to $\beta_i$.  Also, the distribution of $W_i\left({\beta }_i\right)$ does not depend on parameters ${\alpha }_i$ and ${\beta }_i$. In fact, $W_i(\beta _i)$ is distributed as
\[W^*_i=\frac{\sum^{n_i}_{j=0}{R^*_{ij}}}{(n_i+1){\left(\prod^{n_i}_{j=0}{R^*_{ij}}\right)}^{\frac{1}{n_i+1}}},\]
where $R^*_{i0},R^*_{i1},\dots ,R^*_{in_i}$ is the record values from the standard exponential distribution. However, the exact distribution of $W_i(\beta_i)$ is very complicated, and its percentiles are  obtained  using the Monte Carlo simulation.
%Therefore, the confidence interval for ${\beta }_i$ is obtained by solving the following equations as numerically
%\[W_i\left({\beta }_i\right)=W^*_{i(1-\gamma /2)},\ \ \ \ \ \ \ \ W_i\left({\beta }_i\right)=W^*_{i(\gamma /2)}\ ,\]
%where $W^*_{i(\delta )}$ is the $\delta $th percentile from the distribution of $W^*_i$.

Let
\begin{equation}\label{eq.g}
{{\rm g}}_i\left({\beta }_i\right)=\frac{\sum^{n_i}_{j=0}{r^{{\beta }_i}_{ij}}}{(n_i+1){\left(\prod^{n_i}_{j=0}{r_{ij}}\right)}^{\frac{{\beta }_i}{n_i+1}}}-\frac{\sum^{n_i}_{j=0}{R^*_{ij}}}{(n_i+1){\left(\prod^{n_i}_{j=0}{R^*_{ij}}\right)}^{\frac{1}{n_i+1}}},
\end{equation}
where $r_{ij}$ is the observed value of $R_{ij}$, $i=1,2$, $j=0,1,\dots ,n_i$, and $R^*_{i0},R^*_{i1},\dots ,R^*_{in_i}$ are the record values from the standard exponential distribution.

\begin{theorem}\label{thm.ti}
Let $T_i$ be the solution of the following equations with respect to ${\beta }_i$:
\begin{equation}\label{eq.T}
{{\rm g}}_i\left({\beta }_i\right)=0,\ \ i=1,2.
\end{equation}
Then

\noindent i. $T_i$ is unique.

\noindent ii. $T_i$ is a generalized pivotal variable for $\beta_i$.
\end{theorem}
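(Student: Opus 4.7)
The strategy would be to isolate a single monotonicity lemma about the deterministic ``observed'' function
\[h_i(\beta):=\frac{\sum_{j=0}^{n_i}r_{ij}^{\beta}}{(n_i+1)\Bigl(\prod_{j=0}^{n_i}r_{ij}\Bigr)^{\beta/(n_i+1)}},\]
so that in the notation of \eqref{eq.g} we have $g_i(\beta)=h_i(\beta)-W_i^{*}$ with $W_i^{*}$ a random variable whose distribution depends only on $n_i$; both parts of the theorem would then follow from properties of $h_i$ alone.

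For part (i) I would show that $h_i$ is continuous and strictly increasing on $(0,\infty)$ with $h_i(0^{+})=1$ and $\lim_{\beta\to\infty}h_i(\beta)=+\infty$. Strict monotonicity is the only substantive step, and it is exactly the monotonicity of the ratio of arithmetic to geometric means already proved in \cite{wu-ts-06} for $W_i(\beta_i)$; that proof does not use randomness of the $R_{ij}$, so it carries over verbatim to the observed values $r_{ij}$ and no new calculation is required. The limit $h_i(0^{+})=1$ is immediate because every factor $r_{ij}^{\beta}\to 1$, while $h_i(\beta)\to\infty$ because the numerator is asymptotic to $r_{in_i}^{\beta}$ whereas the denominator grows like $(n_i+1)\bigl(\prod_{j=0}^{n_i}r_{ij}\bigr)^{\beta/(n_i+1)}$, and $r_{in_i}>\bigl(\prod_{j=0}^{n_i}r_{ij}\bigr)^{1/(n_i+1)}$ since the records are strictly ordered. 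Since the standard exponential records $R_{ij}^{*}$ are distinct almost surely, AM--GM gives $W_i^{*}>1$ a.s., so $g_i(0^{+})<0<g_i(\infty)$; continuity and strict monotonicity of $g_i$ then yield the unique $T_i\in(0,\infty)$.

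For part (ii) I would verify the two defining conditions of a generalized pivotal quantity from Section~\ref{sec.GPV}. The only randomness in $T_i$ enters through $W_i^{*}$, whose distribution depends only on $n_i$; for fixed observed data $\boldsymbol r_i$, part (i) supplies a deterministic inverse $W_i^{*}\mapsto T_i=h_i^{-1}(W_i^{*})$, so the distribution of $T_i$ is free of all unknown parameters, verifying condition (i) of the definition. For condition (ii), recall that $W_i^{*}$ is identically distributed with $W_i(\beta_i;\boldsymbol R_i)$; substituting $\boldsymbol R_i=\boldsymbol r_i$ in this identification replaces $W_i^{*}$ by $W_i(\beta_i;\boldsymbol r_i)=h_i(\beta_i)$, and the defining equation $g_i(T_i)=0$ reduces to $h_i(T_i)=h_i(\beta_i)$; uniqueness from part (i) then forces $T_i=\beta_i$, which is free of the nuisance parameter $\alpha_i$. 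Thus $T_i$ satisfies both conditions and is a generalized pivotal variable for $\beta_i$. The only potentially delicate point in the whole argument is the monotonicity of $h_i$, and because this is inherited directly from \cite{wu-ts-06}, I do not expect any genuine obstacle.
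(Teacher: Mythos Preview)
Your proposal is correct and follows essentially the same route as the paper: both use the limits $g_i(0^{+})<0$ (via AM--GM on the exponential records) and $g_i(\infty)=\infty$ together with the monotonicity of $h_i$ inherited from \cite{wu-ts-06} for part~(i), and the substitution principle for part~(ii). Your treatment of part~(ii) is in fact more explicit than the paper's, which simply cites the substitution approach of \cite{weerahandi-04} and declares the result obvious; you spell out why the observed value of $T_i$ equals $\beta_i$ and why its distribution is parameter-free.
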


\begin{proof}
\noindent i. Consider $\bar{R}^*_i=\frac{1}{n_i+1}\sum^{n_i}_{j=0}R^*_{ij}$ and $\bar G^*_i=\left(\prod^{n_i}_{j=0}{R^*_{ij}}\right)^{\frac{1}{n_i+1}}$, $i=1,2$ are the arithmetic mean and geometric mean of $R^*_{i0}, R^*_{i1},\dots R^*_{in_i}$. It is well-known $\bar G^*_i < \bar{R}^*_i$. Therefore,
$$
\lim_{\beta_i\rightarrow 0}{{\rm g}}_i\left(\beta_i\right)=1-\frac{\bar G^*_i}{\bar{R}^*_i}<0, \ \ \ \ \ \  \lim_{\beta_i\rightarrow \infty}{{\rm g}}_i\left(\beta_i\right)=\infty.
$$
Also,
${\rm g}_i(\beta_i)$ is an increasing function with respect to $\beta_i$ \citep[for more deltalis see][]{wu-ts-06}. So,  $T_i$ is unique.

\noindent ii. It is obvious using the substitution approach described by   \cite{weerahandi-04}, page 24.
\end{proof}

 Based on the Theorem \ref{thm.ti}, it can be  understood that i) the observed value of $T_i$ is $\beta_i$ and does not depend on the nuisance parameter, $\alpha_i$, and ii) the distribution of $T_i$ does not depend on any parameter.  Now define
\begin{equation}\label{eq.geG}
G=\frac{T_1}{T_2}.
\end{equation}
Therefore, $G$ is a generalized pivotal variable for $\pi$ and can be used for constructing confidence interval for this parameter. A generalized test variable  can also be defined as
\[G^*=G-\pi.\]
The cdf of $G^*$ is
$F_{G^*}(x)=F_G(x+\pi)$, where $F_G(.)$ is the cdf of the generalized pivotal variable $G$ in \eqref{eq.geG} and does not depend on any parameter. Therefore, $F_{G^*}(x)$ is an increasing function with respect to $\pi$,  and   $G^*$ is stochastically decreasing with respect to $\pi$, and   the generalized {\it p}-values for testing the one-sided hypothesis in \eqref{eq.H01} and  \eqref{eq.H02} are
\begin{eqnarray}\label{eq.pvG1}
&&p=P(G^*<0|\pi_0)=P(G<\pi_0),\\
\label{eq.pvG2}
&&p=2\min  \left\{P(G<{\pi }_0), P(G>{\pi }_0)\right\},
\end{eqnarray}
respectively. This generalized confidence interval and the generalized {\it p}-values can be obtained using Monte Carlo simulation. To do this,  an algorithm is given in Section \ref{sec.sim}.

\begin{remark}
A generalized pivotal approach can also be defined for difference between two shape parameters, ${\beta }_1-{\beta }_2$ as
$H=T_1-T_2$.
\end{remark}

\section{ Simulation study}
\label{sec.sim}
A simulation study is performed to assess the accuracy of the proposed generalized procedure. We evaluated the coverage probability and the expected length of the 95\% generalized confidence %as well as the generalized {\it p}-value for testing the hypothesis
about $\pi =\beta_1/\beta_2$. To do this,
%two sets of records, $r_{i0},\dots ,r_{in_i}$, $i=1, 2$, were generated from the Weibull distributions with parameters  $\alpha_i$ and $\beta_i$, $i=1,2$.
%We consider $n_i+1$ record values from each distribution, furthermore,
without loss of generality, we set $\alpha_1=\alpha_2=1$ and
%The generalized confidence interval is obtained using
 use Monte Carlo simulation by the following algorithm:

\begin{algorithm}\label{alg.1}
For given $\beta_1$ and $\beta_2$,

\noindent 1. Two sets of records, $r_{i0},\dots ,r_{in_i}$, ($i=1, 2$) were generated from the Weibull distributions.

\noindent 2. Generate the record values $R^*_{i0},\dots ,R^*_{in_i}$ from the standard exponential distribution.

\noindent 3. Write  the equations ${\rm g}_i(\beta_i)$, $i=1, 2$ in \eqref{eq.g} and obtain $T_i$ by solving the equations in \eqref{eq.T}.

\noindent 4. Calculate $G=T_1/T_2$.

\noindent 5. Repeat Steps 2-4,  $M=10,000$ times and obtain the values $G_1,\dots .,G_M$.

\noindent 6. Sort the values of $G_l$, denoted by $G_{(1)},\dots,G_{(M)}$.
The $100(1-\gamma)\%$ generalized confidence for $\pi $ is $\left[G_{(\gamma M/2)},G_{((1-\gamma /2)M)}\right]$.
%Also, the generalized {\it p}-value  in \eqref{eq.pvG1} for testing the one-sided hypotheses in \eqref{eq.H01} is obtained by $\frac{1}{M}\sum^M_{l=1}{D_l}$.

\noindent 7. Set $D_l=1$ if $G_{(\gamma M/2)}<\frac{\beta_1}{\beta_2}<G_{((1-\gamma /2)M)}$, otherwise $D_l=0$.

\noindent 8. Repeat Steps 1-7, $N=10000$ times. Then coverage probability is $\frac{1}{N}\sum^N_{l=1}{D_l}$.

\end{algorithm}

 For $\beta_2=2$, and some selected values for $\beta_1$, $n_1$, and $n_2$, the coverage probabilities and the expected lengths of the  generalized confidence interval, with 10000 repetition, are given in Table \ref{tab.sim1}. Empirically, we can conclude that

\noindent i. The coverage probability of our method is close to the nominal confidence coefficient.

\noindent ii. For fixed $n_1$ and $n_2$, the expected length of the method is increasing in the parameter shape, $\beta_1$.

\noindent iii. For fixed $\beta_1$ and for  fixed $n_1$, the expected length of the method is decreasing in  $n_2$.

\noindent iv. For fixed $\beta_1$ and for  fixed $n_2$, the expected length of the method is decreasing in  $n_1$.

\begin{table}[h]
\begin{center}
\caption{ Empirical coverage probabilities and expected lengths of the 95\% generalized confidence interval.}
\label{tab.sim1}
\begin{tabular}{|l|l|c|c|c|c|c|c|c|} \hline
\multicolumn{2}{|c|}{} & \multicolumn{7}{|c|}{${\beta }_1$} \\ \hline
 & $n_1,n_2$ & 0.5 & 1.0 & 1.2 & 1.5 & 2.0 & 3.0 & 5.0 \\ \hline
Empirical & 3,3 & 0.946 & 0.952 & 0.953 & 0.946 & 0.948 & 0.949 & 0.952 \\
Coverage & 3,7 & 0.952 & 0.948 & 0.951 & 0.953 & 0.954 & 0.947 & 0.948 \\
 & 3,14 & 0.950 & 0.952 & 0.951 & 0.948 & 0.953 & 0.946 & 0.944 \\
 & 7,3 & 0.956 & 0.948 & 0.952 & 0.946 & 0.953 & 0.954 & 0.951 \\
 & 7,7 & 0.953 & 0.951 & 0.948 & 0.950 & 0.953 & 0.947 & 0.945 \\
 & 7,14 & 0.952 & 0.952 & 0.953 & 0.947 & 0.946 & 0.953 & 0.954 \\
 & 14,3 & 0.945 & 0.952 & 0.954 & 0.950 & 0.947 & 0.949 & 0.952 \\
 & 14,7 & 0.948 & 0.953 & 0.945 & 0.949 & 0.952 & 0.954 & 0.944 \\
 & 14,14 & 0.951 & 0.948 & 0.950 & 0.950 & 0.949 & 0.946 & 0.953 \\ \hline
 \hline
Expected & 3,3 & 2.567 & 4.372 & 5.440 & 6.306 & 9.438 & 11.664 & 24.562 \\
Length & 3,7 & 1.266 & 2.681 & 3.503 & 4.179 & 5.244 & 9.541 & 13.034 \\
 & 3,14 & 0.987 & 2.463 & 2.221 & 3.443 & 3.851 & 6.891 & 11.660 \\
 & 7,3 & 1.567 & 2.786 & 4.116 & 5.200 & 5.852 & 10.089 & 15.026 \\
 & 7,7 & 0.908 & 1.680 & 2.157 & 2.705 & 3.390 & 5.169 & 8.050 \\
 & 7,14 & 0.641 & 1.306 & 1.648 & 2.025 & 2.904 & 3.958 & 6.460 \\
 & 14,3 & 1.418 & 2.608 & 3.522 & 4.425 & 5.310 & 7.814 & 13.786 \\
 & 14,7 & 0.711 & 1.521 & 1.698 & 2.125 & 3.062 & 4.191 & 7.348 \\
 & 14,14 & 0.523 & 1.077 & 1.207 & 1.691 & 2.040 & 2.913 & 5.195 \\ \hline
\end{tabular}
\end{center}
\end{table}

\section{ An illustrative example}
\label{sec.ex}
In this section, we have consider a real data, due to \cite{nelson-82},
%Nelson (1982),
concerning the data on time to breakdown of an insulating fluid between electrodes at two voltages of 34 and 36 kV (minutes).
 This data set  is also given by \citet[][page 3]{lawless-03}.
%Lawless (2003, p. 3).
The times to breakdown at voltages of 34 kV and 36 kV are given bellow;
\bigskip
\begin{center}
\begin{tabular}{crrrrrrrrrr}
Voltage of 34 kV: & 0.96  & 4.15 & 0.19 & 0.78 & 8.01  & 31.75 & 7.35   & 6.50 & 8.27 & 33.91 \\
                  & 32.52 & 3.16 & 4.85 & 2.78 & 4.67  &  1.31 &12.06  &36.71 &72.89 &  \\
Voltage of 36 kV: & 1.97  & 0.59 & 2.58 & 1.69 & 2.71  & 25.50 & 0.35   & 0.99 & 3.99 & 3.67 \\
                  & 2.07  & 0.96 & 5.35 & 2.90 &13.77  &  &  &  &  &  \\
\end{tabular}
\end{center}
\smallskip

%\begin{center}
%0.96, 4.15, 0.19, 0.78, 8.01, 31.75, 7.35, 6.50, 8.27, 33.91,
%
%32.52, 3.16, 4.85, 2.78, 4.67, 1.31, 12.06, 36.71, 72.89,
%\end{center}
%and
%\begin{center}
%1.97, 0.59, 2.58, 1.69, 2.71, 25.50, 0.35, 0.99,
%
%3.99, 3.67, 2.07, 0.96, 5.35, 2.90, 13.77,
%\end{center}
%respectively.

\noindent Therefore, the  upper record values at voltage of 34 kV are
0.96, 4.15, 8.01, 31.75, 33.91, 36.71, 72.89, and at voltage of 36 kV are 1.97, 2.58, 2.71, 25.50.

A model suggested by engineering considerations is that, for a fixed voltage level, the time to breakdown has a Weibull distribution \citep{so-ab-su-06}. Based on \eqref{eq.MLE}, the MLE's of the parameters are
$\hat{\beta}_1=0.5990$,  $\hat{\beta}_2=0.5639$, $\hat{\alpha}_1=2.8303,$ $\hat{\alpha}_2=2.1822$, and their standard errors using the Hessian matrix are
$s.e.(\hat{\beta}_1)=0.2264$,  $s.e.(\hat{\beta}_2)=0.2820$, $s.e.(\hat{\alpha}_1)=3.9072,$ $s.e.(\hat{\alpha}_2)=3.3074$.

The \%95 generalized confidence interval for $\pi=\beta_1/\beta_2$ is $(0.2550, 4.9537)$. At the same time,  \%95  generalized confidence interval for $\beta_1-\beta_2$ is
$(-0.7849, 0.7283)$. Also, we consider testing the equality of shape parameters of two Weibull distributions, i.e. $H_0:\beta_1=\beta_2$ vs. $H_1:\beta_1\neq\beta_2$. Using the
algorithm \ref{alg.1} with $\pi_0=1$, the generalized {\it p}-value for testing this hypotheses is 0.9830. So, it can be concluded that the shape parameters of two Weibull distributions are equal, i.e. $\beta_1=\beta_2=\beta $ at level 0.05. In this case, the MLE's of all parameters are
$\hat{\beta}=0.5857$, $\hat{\alpha}_1=2.6297$, $\hat{\alpha}_2=2.3916$, and their standard errors are
$s.e.(\hat{\beta})=0.1766$,   $s.e.(\hat{\alpha}_1)=3.1333,$ $s.e.(\hat{\alpha}_2)=2.6609$.

\section*{Acknowledgements}
The author is grateful to the Editor-in-Chief and referees for their helpful comments and suggestions on improving the initial
version of this manuscript.

\bibliographystyle{apa}

\end{document}